\newcommand{\inlineQuote}[1]{\textquotedblleft #1\textquotedblright} % Left-right quotes surrounding #1
\newcommand{\naturals}{\mathbb{N}} % Set of natural numbers
\newcommand{\rationals}{\mathbb{Q}} % Set of rational numbers
\newcommand{\reals}{\mathbb{R}} % Set of real numbers
\newcommand{\complexs}{\mathbb{C}} % Set of complex numbers
\newcommand{\fRelCategory}{\operatorname{fRel}} % Category of finite sets and relations
\newcommand{\RMatCategory}[1]{#1\operatorname{-Mat}} % Category of finite sets and matrices valued in a semiring #1
\newcommand{\ket}[1]{\vert #1 \rangle} % Ket labelled #1
\newcommand{\bra}[1]{\langle #1 \vert} % Bra labelled #1
\newcommand{\braket}[2]{\langle #1 \vert #2 \rangle} % Inner product of bra labelled #1 with ket labelled #2
\newcommand{\id}[1]{id_{#1}} % Identity morphism of object #1
\newcounter{theorem_c} % Unified coutner for all theorem environments
\numberwithin{theorem_c}{section} % Theorem environments numbered within sections
\numberwithin{equation}{section} % Equations are also numbered within sections
\theoremstyle{plain} 
	\newtheorem{theorem}[theorem_c]{Theorem}
	\newtheorem{lemma}[theorem_c]{Lemma}
        \newtheorem{counterexample}[theorem_c]{Counterexample}
\newtheoremstyle{exampstyle}
  {2mm} % Space above
  {2mm} % Space below
  {\itshape} % Body font
  {} % Indent amount
  {\bfseries} % Theorem head font
  {.} % Punctuation after theorem head
  {.5em} % Space after theorem head
  {} % Theorem head spec (can be left empty, meaning `normal')
\theoremstyle{exampstyle}
	\newtheorem{definition}[theorem_c]{Definition}
\tikzstyle{every picture}=[baseline=-0.25em,scale=0.5]
\tikzstyle{labelnode}=[fill=white]
\tikzstyle{none}=[inner sep=0pt]
\DeclareRobustCommand{\coprod}{\mathop{\text{\fakecoprod}}}
\newcommand{\fakecoprod}{%
  \sbox0{$\prod$}%
  \smash{\raisebox{\dimexpr.9625\depth-\dp0}{\scalebox{1}[-1]{$\prod$}}}%
  \vphantom{$\prod$}%
}
\newcommand{\fatcat}[2]{\coprod_{#1}{#2}}
\title{Uniqueness of Composition\\in Quantum Theory and Linguistics}
\author{
	Bob Coecke\\
	University of Oxford \\
	\texttt{bob.coecke@cs.ox.ac.uk}
	\and
	Fabrizio Genovese \\
	University of Oxford \\
	\texttt{fabrizio.genovese@cs.ox.ac.uk}
	\and
	Stefano Gogioso\\
	University of Oxford \\
	\texttt{stefano.gogioso@cs.ox.ac.uk}
	\and
	Dan Marsden\\
	University of Oxford \\
	\texttt{dan.marsden@cs.ox.ac.uk}
	\and
	Robin Piedeleu\\
	University of Oxford \\
	\texttt{robin.piedeleu@cs.ox.ac.uk}
}
\begin{document}

\maketitle

\begin{center}\textit{Authors of this work are listed in alphabetical order.}\end{center}

\begin{abstract}
	We derive a uniqueness result for non-Cartesian composition of systems in a large class of process theories, with important implications for quantum theory and linguistics.
	Specifically, we consider theories of wavefunctions valued in commutative involutive semirings---as modelled by categories of free finite-dimensional modules---and we prove that the only bilinear compact-closed symmetric monoidal structure is the canonical one (up to linear monoidal equivalence). Our results apply to conventional quantum theory and other toy theories of interest in the literature, such as real quantum theory, relational quantum theory, hyperbolic quantum theory and modal quantum theory. In computational linguistics they imply that linear models for categorical compositional distributional semantics (DisCoCat)---such as vector spaces, sets and relations, and sets and histograms---admit an (essentially) unique compatible pregroup grammar. 
\end{abstract}

\noindent Cartesian products have played a very important role in category theory since its very early days. In the past few years, however, a number of theories have been developed which have manifestly non-Cartesian monoidal tensor products, such as Kelly's compact closed tensor symmetric monoidal structure \cite{Kelly, KellyLaplaza}. Prominent examples of such theories include (but are by no means limited to):
\begin{itemize}
  \item categorical quantum mechanics (CQM) \cite{AC1, CKBook}, which re-formulates the structural backbone of quantum theory in purely compositional terms; 
  \item categorical compositional distributional models of meaning (DisCoCat) \cite{Coecke2010}, which provide an algorithm to compute meaning of phrases and sentences in natural language processing, given the meaning of individual words and the underlying grammatical structure. 
\end{itemize}
In the case of CQM, the symmetric monoidal structure models the notion of composite quantum systems, while the compact closed structure models maximal entanglement and state-operator duality. In the case of DisCoCat, the compact closed symmetric monoidal structure model the effects of grammar, e.g. by mediating how a transitive verb \inlineQuote{consumes} its object and a subject in order to produce a sentence.  

When a category is interpreted as a theory of processes (i.e. is seen as a \textit{process theory}), having a Cartesian tensor product amounts to a very restrictive \textit{property}: it means that the state of a composite system is specified entirely by states of its component parts. In the diagrammatic language for symmetric monoidal categories (see e.g.~\cite{CKBook}), for example, the states of a bipartite system in a Cartesian theory always take the following, separable form: 
\begin{equation}
\begin{tikzpicture}[scale=0.6]
	\begin{pgfonlayer}{nodelayer}
		\node [style=none] (0) at (-7.25, -0) {};
		\node [style=none] (1) at (-2.25, -0) {};
		\node [style=none] (2) at (-4.75, -1.5) {};
		\node [style=none] (3) at (-3.75, -0) {};
		\node [style=none] (4) at (-3.75, 3) {};
		\node [style=none] (5) at (1.75, -1.5) {};
		\node [style=none] (6) at (0.25, -0) {};
		\node [style=none] (7) at (3.25, -0) {};
		\node [style=none] (8) at (1.75, -0) {};
		\node [style=none] (9) at (1.75, 3) {};
		\node [style=none] (10) at (5.5, 3) {};
		\node [style=none] (11) at (5.5, -0) {};
		\node [style=none] (12) at (5.5, -1.5) {};
		\node [style=none] (13) at (7, -0) {};
		\node [style=none] (14) at (4, -0) {};
		\node [style=none] (15) at (-1, -0) {=};
		\node [style=none] (16) at (-5.75, -0) {};
		\node [style=none] (17) at (-5.75, 3) {};
	\end{pgfonlayer}
	\begin{pgfonlayer}{edgelayer}
		\draw (0.center) to (1.center);
		\draw (1.center) to (2.center);
		\draw (2.center) to (0.center);
		\draw (3.center) to (4.center);
		\draw (6.center) to (7.center);
		\draw (7.center) to (5.center);
		\draw (5.center) to (6.center);
		\draw (8.center) to (9.center);
		\draw (14.center) to (13.center);
		\draw (13.center) to (12.center);
		\draw (12.center) to (14.center);
		\draw (11.center) to (10.center);
		\draw (16.center) to (17.center);
	\end{pgfonlayer}
\end{tikzpicture}
\end{equation}
The use of the word \textit{property} takes here a rigorous, formal meaning: the Cartesian tensor product for a category is defined by certain limits, and hence when it exists it is essentially unique. Conversely, symmetric monoidal structure is in general not a property in this sense: a generic category might admit essentially inequivalent tensor products, even when the additional requirements of symmetric and compact closure are enforced. In this sense, compact closed symmetric monoidal structure is in general a \textit{structure} which one imposes on a category, rather than a property that the category possesses.

Thus said, a conceptual analysis of process theories admitting compact closed symmetric monoidal structure (which we will refer to as \textit{compact closed} process theories) seems to indicate that the latter \textit{should} behave as a property, rather than just being a structure: intuitively, a Cartesian theory is one in which composite systems are non-interacting (or \inlineQuote{minimally} interacting), while a compact closed process theory is one in which composite systems are \inlineQuote{maximally} interacting. To see the latter point, consider the yanking equations that define compact closure in the diagrammatic formalism, in terms of \textit{cups} and \textit{caps}:
\begin{equation}
\begin{tikzpicture}[scale=0.7]
	\begin{pgfonlayer}{nodelayer}
		\node [style=none] (0) at (-0.25, -1) {};
		\node [style=none] (1) at (-0.25, 2.5) {};
		\node [style=none] (2) at (-4.25, -2.5) {};
		\node [style=none] (3) at (-4.25, 0.75) {};
		\node [style=none] (4) at (-2.25, 0.75) {};
		\node [style=none] (5) at (-2.25, -1) {};
		\node [style=none] (6) at (-1.75, 2) {};
		\node [style=none] (7) at (-1.75, 0.25) {};
		\node [style=none] (8) at (-4.75, 0.25) {};
		\node [style=none] (9) at (-4.75, 2) {};
		\node [style=none] (10) at (0.25, -0.5) {};
		\node [style=none] (11) at (-2.75, -0.5) {};
		\node [style=none] (12) at (-2.75, -2.25) {};
		\node [style=none] (13) at (0.25, -2.25) {};
		\node [style=none] (14) at (1.5, -0) {$=$};
		\node [style=none] (15) at (2.75, 2.5) {};
		\node [style=none] (16) at (2.75, -2.5) {};
	\end{pgfonlayer}
	\begin{pgfonlayer}{edgelayer}
		\draw (0.center) to (1.center);
		\draw (2.center) to (3.center);
		\draw (5.center) to (4.center);
		\draw [bend right=90, looseness=1.50] (4.center) to (3.center);
		\draw [bend right=90, looseness=1.50] (5.center) to (0.center);
		\draw [dotted] (7.center) to (8.center);
		\draw [dotted] (8.center) to (9.center);
		\draw [dotted] (9.center) to (6.center);
		\draw [dotted] (6.center) to (7.center);
		\draw [dotted] (13.center) to (12.center);
		\draw [dotted] (12.center) to (11.center);
		\draw [dotted] (11.center) to (10.center);
		\draw [dotted] (10.center) to (13.center);
		\draw (16.center) to (15.center);
	\end{pgfonlayer}
\end{tikzpicture}
\end{equation}
Because an appropriate composition of cup and cap results in the identity---which is the process of \inlineQuote{preserving everything that can be said about a system}---they can be seen to realise a maximal flow of information between two systems. In this sense of \inlineQuote{mediating maximal interaction}, one would expect compact closed symmetric monoidal structure to behave as a property. Hence the question that this paper sets out to answer: is compact closed symmetric monoidal structure essentially unique in those process theories of interest in CQM and DisCoCat? In physical terms, is there a unique notion of maximal entanglement in a theory? In linguistic terms, is the manner in which meaning composes through the grammar unique in a given semantic model? To put it another way: are the philosophical, intuitive arguments drafted above backed up by suitably rigorous mathematical results?

In Section~\ref{section:bobwrong}, we show that our question has non-trivial content, by providing examples of categories with inequivalent compact closed symmetric monoidal structure. In Section~\ref{section:bobright}, we show that compact closed symmetric monoidal structure is indeed a property for a large family of process theories of interest quantum theory and linguistics. In Section~\ref{section:NLP}, finally, we comment on the impact of our observation for compositional models of natural language.

\section{Compactness is not a property...}
\label{section:bobwrong}
In this Section, we see that compact closed symmetric monoidal structure on a category is, in general, very far from being essentially unique. Our first counterexample to essential uniqueness comes from certain (degenerate) Lambek pregroups~\cite{Lambek1997}, of interest in natural language processing.
\begin{lemma}
  \label{lem:discreteabeliangroups}
  The discrete category on the underlying set $X$ of an abelian group~$G:=(X,\times,e)$ carries the structure of a compact closed symmetric monoidal category, with the monoidal tensor product given by the group multiplication, tensor unit the group identity~$e$, and dual objects given by the group inverse.
\end{lemma}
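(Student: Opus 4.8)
The plan is to exploit the fact that a discrete category is \emph{thin}: between any two objects there is at most one morphism, namely an identity, and it exists precisely when the two objects coincide. This single observation does almost all of the work. Because hom-sets are either empty or singletons, any two parallel morphisms are automatically equal, so \emph{every} well-typed diagram in such a category commutes for free. Consequently, equipping the discrete category on $X$ with compact closed symmetric monoidal structure reduces entirely to (i) specifying the action of the tensor on objects, and (ii) checking that each required structure morphism is \emph{well-typed}, meaning that its prescribed source and target coincide as elements of $X$. Once well-typedness is established, each such morphism is forced to be an identity, and all coherence conditions---pentagon, triangle, hexagon, naturality, and the yanking equations---hold automatically by thinness.

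First I would set up the monoidal data. On objects the tensor is $x \otimes y := x \times y$, with unit $e$; on morphisms it can only be the identity, so bifunctoriality is immediate. The associator $(x\otimes y)\otimes z \to x\otimes(y\otimes z)$ and the two unitors are well-typed precisely because $G$ satisfies associativity and the unit laws $e\times x = x = x\times e$; they may therefore be taken to be identities, so the category is in fact \emph{strict} monoidal. The symmetry $x\otimes y \to y\otimes x$ is well-typed exactly because $G$ is abelian, giving $x\times y = y\times x$; this is the one place where commutativity is essential.

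Next I would address compact closure by declaring the dual of $x$ to be its group inverse $x^{-1}$. The cup $\eta_x \colon e \to x^{-1}\otimes x$ and cap $\epsilon_x \colon x\otimes x^{-1} \to e$ are well-typed because the inverse axioms give $x^{-1}\times x = e = x\times x^{-1}$; both therefore collapse to the unique endomorphism $\mathrm{id}_e$. The two yanking equations are then statements equating built-up morphisms to $\mathrm{id}_x$ and $\mathrm{id}_{x^{-1}}$ respectively, i.e.\ equations between endomorphisms of a single object, so they hold immediately by thinness with no computation required.

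The hard part, such as it is, is purely bookkeeping: matching each piece of structure to the group axiom that guarantees its well-typedness---associativity for the associator, the unit laws for the unitors, commutativity for the symmetry, and the inverse axioms for the cups and caps. There is no genuine analytic obstacle, since thinness trivialises every coherence and naturality requirement. I would therefore expect the whole argument to amount to this dictionary between the compact closed symmetric monoidal data and the abelian-group axioms, closed off by the remark that in a discrete category any two parallel morphisms agree.
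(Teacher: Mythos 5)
Your proposal is correct and follows essentially the same route as the paper's own proof: both observe that all morphisms are identities (so bifunctoriality and all coherence/snake equations are automatic), that the group axioms make the structure strict symmetric monoidal, and that compact closure reduces to the well-typedness condition $x \times x^{-1} = e$. Your version simply spells out explicitly the dictionary between group axioms and structure morphisms that the paper leaves implicit.
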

\begin{proof}
  It is trivial to observe that the group multiplication extends to a bi-functor, as all the morphisms in the category are identities. The tensor clearly gives a strict symmetric monoidal structure, and compact closed structure amounts to the observation that~$g \times g^{-1} = e$ (and the snake equations follow automatically).
\end{proof}
\begin{lemma}
  \label{lem:basiciso}
  For abelian groups~$G_1$ and~$G_2$, there is a monoidal equivalence between the corresponding discrete symmetric monoidal categories from Lemma~\ref{lem:discreteabeliangroups} if and only if~$G_1$ and~$G_2$ are isomorphic groups.
\end{lemma}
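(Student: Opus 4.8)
The plan is to prove the two implications separately, exploiting throughout the fact that the categories in question are \emph{discrete}, so that the only isomorphisms are identities and hence isomorphic objects are literally equal. Write $\mathcal{C}_i$ for the discrete compact closed category built from $G_i = (X_i, \times_i, e_i)$ as in Lemma~\ref{lem:discreteabeliangroups}. The reverse implication is routine: a group isomorphism $\phi \colon G_1 \to G_2$ is in particular a bijection $X_1 \to X_2$ satisfying $\phi(g \times_1 h) = \phi(g) \times_2 \phi(h)$ and $\phi(e_1) = e_2$, and since every morphism in sight is an identity this data assembles into a strict monoidal functor that is moreover an isomorphism of categories, hence a monoidal equivalence. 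Preservation of duals is automatic, since $\phi(g^{-1}) = \phi(g)^{-1}$ for any group homomorphism.

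For the forward implication, suppose $(F, \mu, \epsilon)$ is a monoidal equivalence $\mathcal{C}_1 \to \mathcal{C}_2$, and let $f \colon X_1 \to X_2$ be the action of $F$ on objects. First I would observe that an equivalence of discrete categories is precisely a bijection on objects: full faithfulness forces $f$ to be injective (distinct objects have no morphisms between them, so their images must stay distinct), while essential surjectivity forces $f$ to be surjective (being isomorphic to $f(a)$ in a discrete category means being equal to it). Thus $f$ is a bijection.

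It remains to see that $f$ is multiplicative. Here I would use the coherence isomorphisms $\mu_{a,b} \colon f(a) \times_2 f(b) \to f(a \times_1 b)$ of the strong monoidal structure, together with the unit comparison $\epsilon \colon e_2 \to f(e_1)$. Since these are isomorphisms in the \emph{discrete} category $\mathcal{C}_2$, their mere existence forces source and target to coincide, giving $f(a) \times_2 f(b) = f(a \times_1 b)$ and $f(e_1) = e_2$. Hence the bijection $f$ is a group isomorphism, completing the argument.

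The proof is short precisely because discreteness collapses all coherence data to strict equalities; accordingly the only point requiring genuine care is the conceptual one that a monoidal equivalence need not be strict \emph{a priori}, but becomes so here because the target admits no nontrivial isomorphisms. I expect the main (very mild) obstacle to be spelling out cleanly why the weak coherence isomorphisms of an arbitrary strong monoidal functor are forced to be identities in the discrete setting, and hence why no genuinely non-strict monoidal equivalence between these categories can exist.
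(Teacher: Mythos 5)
Your proof is correct and takes essentially the same approach as the paper: both arguments observe that an equivalence of discrete categories must be a bijection (isomorphism) on objects, and that the coherence isomorphisms of any strong monoidal functor collapse to identities in a discrete target, so the equivalence is exactly a group isomorphism. Your write-up merely spells out in detail the steps the paper's proof states in compressed form.
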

\begin{proof}
  Any equivalence of discrete categories must be an isomorphism, and any strong monoidal functor between discrete monoidal categories must be a strict monoidal functor. Therefore a strong monoidal equivalence between these categories must amount exactly to a group isomorphism.
\end{proof}
\begin{counterexample}
  \label{counter:basic}
  There are two non-isomorphic abelian group structures on a four element set, namely the cyclic group $\mathbb{Z}_4$ and the Klein four-group $\mathbb{Z}_2 \times \mathbb{Z}_2$. Therefore the discrete category with four objects carries two monoidally inequivalent compact closed symmetric monoidal structures.
\end{counterexample}
Counterexample~\ref{counter:basic} is somewhat unsatisfactory, as the categories in question are discrete. We can address this using the following construction, which \inlineQuote{fattens up} the Homsets. Let be $X$ be a set and~$M$ a commutative monoid. Consider the~$X$-fold coproduct of~$M$, i.e. the category~$\fatcat{X}{M}$ having the elements of~$X$ as objects and Homsets specified as follows:
\begin{equation*}
  \fatcat{X}{M}\big(g,g'\big) =
  \begin{cases}
    M &\text{ if } g = g'\\
    \emptyset &\text{ otherwise}
  \end{cases}
\end{equation*}
Fixing an abelian group structure~$G:=(X,\times,e)$ on the set $X$ endows the category $\fatcat{X}{M}$ with the structure of a compact closed symmetric monoidal category, which we denote by $\fatcat{G}{M}$: the monoidal unit is~$e$, the tensor product is given by the group multiplication, and the cups and caps are given by the identity element in~$M$.
\begin{lemma}
  \label{lem:fatdisconnectediso}
  For two abelian groups~$G_1$ and~$G_2$, and a commutative monoid~$M$ with no non-trivial inverses (e.g. a free monoid), there is a strong monoidal equivalence between the corresponding monoidal categories~$\fatcat{G_1}{M}$ and~$\fatcat{G_2}{M}$ if and only if~$G_1$ and~$G_2$ are isomorphic groups.
\end{lemma}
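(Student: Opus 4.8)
The plan is to treat the two implications separately, the forward direction being a short explicit construction and the reverse direction resting on a single structural observation about isomorphisms in $\fatcat{X}{M}$.

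For the ``if'' direction, suppose $\phi : G_1 \to G_2$ is a group isomorphism. I would define a functor $F : \fatcat{G_1}{M} \to \fatcat{G_2}{M}$ acting as $\phi$ on objects and as the identity map $M \to M$ on each nonempty Homset. This is well defined on Homsets because $\fatcat{G_i}{M}(g,g')$ equals $M$ precisely when $g = g'$ and is empty otherwise, and $\phi$, being a bijection, matches the nonempty cases on the two sides. Functoriality is immediate since composition on either side is the monoid multiplication of $M$ and identities are sent to $1_M$. Because $\phi(a \times_1 b) = \phi(a) \times_2 \phi(b)$ and $\phi(e_1) = e_2$, the tensor product and unit are preserved on the nose, so $F$ is in fact a strict monoidal functor; being bijective on objects and the identity on each nonempty Homset, it is an isomorphism of categories, hence a fortiori a strong monoidal equivalence.

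For the ``only if'' direction, the crucial observation---and the real content of the lemma---is that $M$ having no non-trivial inverses forces the only isomorphisms in $\fatcat{X}{M}$ to be identities. Indeed, there are no morphisms at all between distinct objects, and an isomorphism $g \to g$ is exactly an invertible element of $M$, of which, by hypothesis, only $1_M$ exists. Consequently two objects of $\fatcat{X}{M}$ are isomorphic if and only if they are equal, i.e. every isomorphism class of objects is a singleton. Now let $F : \fatcat{G_1}{M} \to \fatcat{G_2}{M}$ be a strong monoidal equivalence, hence fully faithful and essentially surjective. First I would extract an underlying bijection $\phi : X_1 \to X_2$: essential surjectivity together with the singleton isomorphism classes makes $F$ surjective on objects, while full faithfulness (which reflects as well as preserves isomorphisms) together with the same observation makes it injective on objects, since $F(g) = F(g') \Leftrightarrow F(g) \cong F(g') \Leftrightarrow g \cong g' \Leftrightarrow g = g'$. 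Next I would exploit the monoidal coherence data: the structure isomorphisms $F(a) \otimes F(b) \xrightarrow{\sim} F(a \otimes b)$ and $I \xrightarrow{\sim} F(I)$ are genuine isomorphisms, so by the previous observation their domains and codomains must coincide as objects, yielding $\phi(a) \times_2 \phi(b) = \phi(a \times_1 b)$ and $\phi(e_1) = e_2$. Thus $\phi$ is a bijective monoid homomorphism between the underlying groups, and therefore a group isomorphism, as required.

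The step I expect to be the main obstacle is precisely the structural observation that collapses every isomorphism to an identity; everything downstream---the passage from an abstract equivalence to an honest bijection of underlying sets, and the degeneration of the weak monoidal coherence isomorphisms into strict equalities of objects---follows mechanically once it is in place. The one point I would verify with care is that full faithfulness genuinely prevents $F$ from identifying two distinct objects, which again hinges entirely on the absence of non-trivial isomorphisms guaranteed by the hypothesis on $M$.
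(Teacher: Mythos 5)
Your proof is correct and takes essentially the same route as the paper's, which compresses the identical argument into two lines: any equivalence between these categories must be an isomorphism (the categories are disconnected), and any strong monoidal functor between them must be strict (there are no non-trivial isomorphisms), so its action on objects induces a group isomorphism. The only remark worth making is that the step your argument actually uses downstream---isomorphic objects are equal---already follows from disconnectedness alone (Homsets between distinct objects are empty), so the hypothesis that $M$ has no non-trivial inverses is needed only for your stronger claim that every isomorphism is an identity (equivalently, for strictness of the functor, which is how the paper phrases it), not for extracting the group isomorphism itself.
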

\begin{proof}
  Any equivalence of categories must be an isomorphism, as the categories are disconnected. Any strong monoidal functor between such categories must be strict monoidal functor (there are no non-trivial isomorphisms), and in particular its action on the objects must induce an isomorphism of groups.
\end{proof}
\begin{counterexample}
  \label{counter:fatdisconnected}
  The free monoid $M:=\{a,b\}^*$ on the two element set~$\{a,b\}$ has no non-trivial inverses. By Lemma~\ref{lem:fatdisconnectediso} the 4-fold coproduct of~$\{a,b\}^*$ carries two monoidally inequivalent compact closed symmetric monoidal structures, induced by the cyclic group $\mathbb{Z}_4$ and the Klein four-group $\mathbb{Z}_2 \times \mathbb{Z}_2$.
\end{counterexample}
A similar construction can be used to produce a connected family of counterexamples, for all sets $X$ and commutative monoids $M$ with no non-trivial inverses: objects are the elements of $X$ as before, but now all Homsets are taken to be the same commutative monoid $M$, with composition of morphisms defined to be product in $M$. Any two non-isomorphic abelian groups structures $G_1 := (X,\times_1,e_1)$ and $G_2 := (X,\times_2,e_2)$ on the same underlying set $X$ will endow the category with two compact closed symmetric monoidal structures---where tensor product of morphisms is defined to be product in $M$---which are not monoidally isomorphic. Unfortunately, this counterexample cannot be strengthened to monoidal inequivalence, but a thorough investigation of this issue is left to future work.

\section{...except in those cases where it is a property.}
\label{section:bobright}
Consider a commutative semiring $S$ with involution (used for conjugation, duals and dagger), and define the dagger compact category $\RMatCategory{S}$ of free finite-dimensional $S$-modules as follows.
\begin{enumerate}
 	\item[(i)] The objects of $\RMatCategory{S}$ are the free finite-dimensional $S$-modules, in the form $S^X$ for finite sets $X$ (we will denote by $\ket{x}_{x \in X}$ the standard orthonormal basis of $S^X$).
 	\item[(ii)] The morphisms of $\RMatCategory{S}$ are the $S$-linear maps, which we can think of as $S$-valued matrices.
 	\item[(iii)] The tensor product is the usual symmetric tensor product of $S$-modules, and its action on morphisms is given by the Kronecker product of matrices.
 	\item[(iv)] The dagger of a matrix is given by its transpose conjugate (with respect to the involution $^\ast$ of $S$), and in particular the (possibly degenerate) inner product of two vectors $\ket{\phi} := \sum_x \phi_x \ket{x}$ and $\ket{\psi} := \sum_x \psi_x \ket{x}$ in the same $S$-module $S^X$ is given by:
 	\begin{equation}
 		\braket{\phi}{\psi} := \sum_{x} \phi_x^\ast \psi_x
 	\end{equation}
 	\item[(v)] The dual object $(S^X)^\ast$ is the free finite-dimensional $S$-module of $S$-linear maps $S^X \rightarrow S$, and we denote by $\ket{x^\ast}_{x \in X}$ the orthonormal basis of $(S^X)^\ast$ given by the $S$-linear maps $\bra{x}: S^X \rightarrow S$ for all $x \in X$. More in general, we denote by $\ket{\psi^\ast}$ the state in $(S^X)^\ast$ corresponding to the $S$-linear map $\bra{\psi}: S^X \rightarrow X$ (note that when $\ket{\psi} = \sum_x \psi_x \ket{x}$ we necessarily have $\ket{\psi^\ast} = \sum_x \psi_x^\ast \ket{x^\ast}$).
 	\item[(vi)] The compact closed structure is given by considering the caps $\epsilon_{S^X} : S^X \otimes (S^X)^\ast \rightarrow S$ and cups $\eta_{S^X}: S \rightarrow (S^X)^\ast \otimes S^X$ defined as follows:
 	\begin{equation}
 		\epsilon_{S^X} := \sum_{x \in X} \bra{x} \otimes \bra{x^\ast} \hspace{3cm} \eta_{S^X} := \sum_{x \in X} \ket{x^\ast} \otimes \ket{x}
 	\end{equation}
 	It is easy to check, for example, that $\epsilon_{S^X} \circ (\ket{\phi^\ast} \otimes \ket{\psi})$ gives the inner product $\braket{\phi}{\psi}$:
 	\begin{equation}
 		\epsilon_{S^X} \circ (\ket{\phi^\ast} \otimes \ket{\psi}) = (\sum_{x \in X} \bra{x} \otimes \bra{x}) \circ (\sum_x \phi_x^\ast \ket{x} \otimes  \psi_x\ket{x}) = \sum_x \phi_x^\ast \psi_x
 	\end{equation}
  In particular, the category $\RMatCategory{S}$ is enriched in itself, with morphisms $S^X \rightarrow S^Y$ carrying the natural structure of the free finite-dimensional $S$-module $(S^X)^\ast \otimes S^Y$. Composition and tensor product are $S$-bilinear bi-functors\footnote{This includes the statement that associators and unitors for the tensor product satisfy appropriate $S$-linearity conditions.}, and the dagger is an $S$-linear endofunctor.
 	\item[(vii)] The category $\RMatCategory{S}$ has biproducts given by Cartesian products of objects, and $S$-linear structure given by matrix addition; both tensor and dagger respect the biproducts and the $S$-linear structure.
\end{enumerate}
Categories in the form $\RMatCategory{S}$ are used to model quantum theory (for $S:=\complexs$, with complex conjugation as involution) and many other quantum-like theories of interest in the literature. Examples include: real quantum theory (for $S:=\reals$, with the identity as involution); relational quantum theory (for $S$ the booleans, with the identity as involution); hyperbolic quantum theory (for $S$ the split-complex numbers, with split-complex conjugation as involution); $p$-adic quantum theory (for $S$ a quadratic extension of the $p$-adic complex numbers, with field-theoretic conjugation as involution); modal quantum theory (for $S$ a finite field). For a detailed description of such quantum-like theories, see \cite{Gogioso2017}. 

The examples above suggest that the process theories modelled by $\RMatCategory{S}$ categories have very precise physical content. As a consequence, one could intuitively expect that the notion of parallel composition of processes induced by the tensor product should arise in a natural, essentially unique way, i.e. that it should be a property of these categories. Luckily, we are able to show that this is indeed the case.

\begin{theorem}\label{thm_freeFdSemimodulesRlinearUniqueness}
Let $S$ be a commutative semiring with involution. There is a unique $S$-bilinear compact-closed symmetric monoidal structure on $\RMatCategory{S}$ (up to $S$-linear monoidal equivalence).
\end{theorem}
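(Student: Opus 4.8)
The plan is to keep the canonical structure $(\otimes, S)$ fixed and to show that any competing $S$-bilinear compact-closed symmetric monoidal structure, say $(\boxtimes, J)$, is linked to it by a strong symmetric monoidal equivalence which is moreover $S$-linear. Since a strong symmetric monoidal functor automatically sends dual objects to dual objects, the compact structure (the cups and caps) will then be transported for free, and uniqueness of the compact-closed structure reduces to uniqueness of the underlying symmetric monoidal structure; I would therefore not analyse the cups and caps separately. Because every object of $\RMatCategory{S}$ is a free module $S^X$, the new unit must already have the form $J = S^{X_0}$ for some finite set $X_0$.

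The crucial first step is to pin down this unit. In any symmetric (indeed braided) monoidal category the scalars $\mathrm{End}(J)$ form a commutative monoid under composition, by the Eckmann--Hilton argument. But composition in $\RMatCategory{S}$ is matrix multiplication, so $\mathrm{End}(S^{X_0}) = M_{|X_0|}(S)$, and this fails to be commutative as soon as $|X_0| \geq 2$ whenever $0 \neq 1$ in $S$ (the degenerate case of a trivial $S$ makes the whole theorem vacuous). Moreover bilinearity forces $A \boxtimes 0 = 0$, so $J$ cannot be the zero object without collapsing the category. Hence $|X_0| = 1$ and $J \cong S$.

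With $S$ established as the unit, I would next read off $\boxtimes$ on objects and morphisms. Bilinearity makes $\boxtimes$ additive in each argument, and additive functors preserve finite biproducts, so $\boxtimes$ preserves them in each slot; together with $S \boxtimes S \cong S$ (as $S$ is now the unit) this gives $S^X \boxtimes S^Y \cong \bigoplus_{X \times Y} S = S^{X \times Y}$, so $\boxtimes$ agrees with $\otimes$ on objects. Writing $\iota^\boxtimes_{x,y} := \iota_x \boxtimes \iota_y$ for the induced biproduct injections, I would fix the comparison isomorphism $\psi_{X,Y}\colon S^X \boxtimes S^Y \to S^{X\times Y}$ by $\iota^\boxtimes_{x,y} \mapsto \ket{(x,y)}$. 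Using bifunctoriality to rewrite $(f \boxtimes g)\circ(\iota_x \boxtimes \iota_y) = (f \circ \iota_x)\boxtimes(g\circ\iota_y)$ and then $S$-bilinearity to expand, one checks that $f \boxtimes g$ becomes, under $\psi$, exactly the Kronecker product $f \otimes g$; the key elementary facts are that the tensor of two scalars equals their product and that $\pi_a \circ \iota_x = \delta_{ax}$. This simultaneously shows $\psi$ is natural.

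It remains to verify that $\psi$ is coherent. The associator, unitors and symmetry of $\boxtimes$ are natural transformations between biproduct-preserving functors, so they are determined by their single components at the generator $S$; unit coherence forces these to be trivial (for instance the symmetry scalar $\sigma^\boxtimes_{S,S}$ is a symmetry of the unit with itself and hence the identity), so they match the canonical data under $\psi$. Assembling the $\psi_{X,Y}$ and the unit isomorphism $S \to J$ then exhibits the identity functor as the required $S$-linear symmetric monoidal equivalence. I expect the conceptual crux to be the unit computation of the second step---the clash between commutativity of scalars and non-commutativity of matrix multiplication---while the main labour lies in the coherence bookkeeping of the last step, which however reduces, via additivity and naturality, to the easily-checked behaviour on the generator $S$.
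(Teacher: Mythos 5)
Your proposal is correct, but it takes a genuinely different route from the paper's proof. The paper never isolates the unit first: it uses compact closure itself, combined with $S$-bilinearity, to obtain an $S$-linear isomorphism between the homsets $n \rightarrow m$ and $J \rightarrow n^\circ \odot m$, hence (working skeletally, by freeness of the modules) the numerical equation $J \times (n^\circ \odot m) = n \times m$; divisibility then forces $J=1$, $n^\circ = n$ and $n \odot m = n \times m$ in one stroke, after which an identity-on-objects monoidal isomorphism is constructed by hand on orthonormal bases indexed by prime factorisations, and finally the cups and caps are matched explicitly by passing to a non-self-dual presentation of $\RMatCategory{S}$. You instead pin down the unit by Eckmann--Hilton (commutativity of $\mathrm{End}(J)$ against non-commutativity of $M_n(S)$ for $n \geq 2$ and $S$ nontrivial, which is correct: $E_{11}E_{12} = E_{12} \neq 0 = E_{12}E_{11}$), and recover the tensor on objects and morphisms from additivity and biproduct preservation. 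Two consequences are worth noting. First, your argument for the monoidal part never invokes compactness, so you in fact prove a stronger statement: \emph{every} $S$-bilinear symmetric monoidal structure on $\RMatCategory{S}$ is $S$-linearly monoidally equivalent to the canonical one, compact closure entering only when transporting duals at the end. Second, your shortcut for the cups and caps is legitimate and neatly dissolves the paper's stated worry that the classical uniqueness-of-duals result might not be $S$-linear: the canonical natural isomorphism between any two duals of the same object has components which are morphisms of $\RMatCategory{S}$, hence automatically $S$-linear maps, so nothing extra needs to be checked (the paper instead verifies this by explicit construction of $\bar{F}$). The price of your route is the deferred coherence bookkeeping; it does go through by your determination-at-the-generator argument (for instance, the pentagon forces the transported associator scalar $a$ to satisfy $a^3 = a^2$, hence $a = 1$, while the unitor scalars are absorbed into the unit comparison of the equivalence), but spelling this out is where most of the labour would sit, and it is exactly the part that the paper's explicit basis-level construction avoids. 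One small caution: your claim that bilinearity forces $A \boxtimes 0 = 0$ should be justified via preservation of zero morphisms (or the semimodule axiom $0_S \cdot x = 0$), not via additive idempotence, since semirings lack subtraction and $x + x = x$ does not imply $x = 0$ there.
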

\begin{proof}
Without loss of generality, we work with the skeletal, self-dual version of $\RMatCategory{S}$ which has the natural numbers as objects, with tensor product $n \otimes m := n \times m$ and duals $n^\ast := n$. We will construct a monoidal isomorphism on the natural numbers version of $\RMatCategory{S}$, that will subsequently lift to a monoidal equivalence on the full version of $\RMatCategory{S}$. 

We denote by $(\RMatCategory{S},\otimes,S,^\ast)$ the compact closed symmetric monoidal structure defined above, with caps $\epsilon_{n} : n \otimes n \rightarrow 1$ and cups $\eta_{n}: 1 \rightarrow n \otimes n $ (because we have $n^\ast = n$). We denote by $(\RMatCategory{S},\odot,J,^\circ)$ some other symmetric monoidal closed structure on $\RMatCategory{S}$, where $\odot$ is $S$-bilinear, and we assume that $(\RMatCategory{S},\odot,J,^\circ)$ is compact closed with caps $\bar{\epsilon}_{n} : n \odot n^\circ \rightarrow 1$ and cups $\bar{\eta}_{n}: 1 \rightarrow n^\circ \odot n$.

If $S$ is a commutative semiring with involution, then compact closure together with $S$-bilinearity gives an $S$-linear isomorphism between $S^{n \times m}$ (the $S$-linear maps $n \rightarrow m$) and $S^{J \times (n^\circ \odot m)}$ (the $S$-linear maps $J \rightarrow (n^\circ \odot m)$). As we are working with free finite-dimensional $S$-modules, we obtain  the following equation, valid for all $n,m \in \naturals$:
\begin{equation}\label{eqn_tensorOnObjects}
	J \times (n^\circ \odot m) = n \times m
\end{equation}
The RHS has to be divisible by $J$ for all $n,m \in \naturals$, so we immediately get that $J=1$, and we are left with the equation $n^\circ \odot m = n \times m$. But by taking $m=J$ we also get $n^\circ = n^\circ \odot J = n \times J = n\times 1 = n$, and hence we are left with $n \odot m = n \times m$, proving that $(\otimes, 1)$ and $(\odot, J)$ coincide on objects. 

Now we show that we can construct a monoidal isomorphism $F: (\RMatCategory{S},\otimes,S,^\ast) \rightarrow (\RMatCategory{S},\odot,J,^\circ)$, which is the identity on objects. For each finite prime $p$, let $(\ket{j;p})_{j=1}^p$ be the standard orthonormal basis for the object $p$ (corresponding to the free finite-dimensional $S$-module $S^p$). For each natural number $n \geq 2$, decompose $n$ into prime factors $n = p_{1;n} \cdot ... \cdot p_{K_n;n}$, making the decomposition unique by picking primes in non-decreasing order. Define an orthonormal basis for the object $n = p_{1;n} \otimes ... \otimes p_{K_n;n}$ (corresponding to the free finite-dimensional $S$-module $S^n \cong S^{p_{1;n}} \otimes ... \otimes S^{p_{K_n;n}}$) as follows:
\begin{equation}
\Big( \, \ket{\underline{j};n} :=  \bigotimes\limits_{k=1}^{K_n} \ket{j_k;p_{k;n}}  \, \Big)_{ \underline{j} \,\in\, \prod_{k=1}^{K_n} \{1,...,p_{k;n}\}}
\end{equation}
We will first define the functor $F$ on states and effects, and then we will extend it to a full functor by imposing $S$-linearity. The definition on states is given by $F \Big( \, \ket{\underline{j};n} \Big) :=  \bigodot\limits_{k=1}^{K_n} \ket{j_k;p_{k;n}}
$,\vspace{-3mm} while its dual on effects is given by $F \Big( \, \bra{\underline{j};n} \Big) :=  \bigodot\limits_{k=1}^{K_n} \bra{j_k;p_{k;n}}$. Note that $\odot$ need not respect the dagger structure of $(\RMatCategory{S},\otimes,S,^\ast)$, in which case $F$ need not be a dagger functor. On scalars, we necessarily have $a \odot b = a \circ b = a \otimes b$, without needing to invoke any $S$-linearity requirement; hence we can define $F$ to be the identity on scalars. Finally, we extend $F$ to arbitrary morphisms by functoriality and $S$-linearity:
\begin{equation}\label{eqn_monoidalIsomorphism}
F \Big( \, \sum_{\underline{i}}\sum_{\underline{j}} f_{\underline{j}\underline{i}} \ket{\underline{j};m}  \bra{\underline{i};n} \, \Big) :=   \sum_{\underline{i}}\sum_{\underline{j}} f_{\underline{j}\underline{i}} \Big(\bigodot\limits_{k=1}^{K_m} \ket{j_k;p_{k;m}}\Big) \Big(\bigodot\limits_{k=1}^{K_n} \bra{i_k;p_{k;n}}\Big)
\end{equation}
Equation \ref{eqn_monoidalIsomorphism} defines an $S$-linear monoidal functor $F: (\RMatCategory{S},\otimes,S,^\ast) \rightarrow (\RMatCategory{S},\odot,J,^\circ)$ which is the identity on objects. Now observe that $\odot$ must respect identities, and hence that we get the following resolution of the identity for every object $n$: 
\begin{align}\label{eqn_resolutionIdentity}
\id{n} = \sum_{\underline{i}}\sum_{\underline{j}} \Big(\bigodot\limits_{k=1}^{K_n} \ket{j_k;p_{k;n}}\Big) \Big(\bigodot\limits_{k=1}^{K_n} \bra{i_k;p_{k;n}}\Big)
\end{align}
Because of Equation \ref{eqn_resolutionIdentity}, all morphisms $g: n \rightarrow m$ can be written in matrix form as follows:
\begin{align}
g = \id{m} \circ g \circ \id{n} = \sum_{\underline{i}}\sum_{\underline{j}} g_{\underline{j}\underline{i}} \Big(\bigodot\limits_{k=1}^{K_m} \ket{j_k;p_{k;)}}\Big) \Big(\bigodot\limits_{k=1}^{K_n} \bra{i_k;p_{k;n}}\Big) 
\end{align}
Given the matrix form above, a monoidal inverse $F^{-1}$ for the monoidal functor $F$ is defined by:
\begin{equation}
F^{-1} \Big( \, \sum_{\underline{i}}\sum_{\underline{j}} g_{\underline{j}\underline{i}} \Big(\bigodot\limits_{k=1}^{K_m} \ket{j_k;p_{k;m}}\Big) \Big(\bigodot\limits_{k=1}^{K_n} \bra{i_k;p_{k;n}}\Big) \, \Big) := \sum_{\underline{i}}\sum_{\underline{j}} g_{\underline{j}\underline{i}} \ket{\underline{j};m}  \bra{\underline{i};n}
\end{equation}
Hence $F$ as defined by Equation \ref{eqn_monoidalIsomorphism} is a monoidal isomorphism, as required. In the full version of $\RMatCategory{S}$, this means that the tensor product part of the compact closed symmetric monoidal structure is essentially unique; however, it does not say anything about the cups and caps providing the compactness. 

It is well-known that cups and caps for a given symmetric monoidal structure are unique up to natural isomorphism \cite{Selinger2009}. However, the traditional result does not guarantee the existence of an $S$-linear isomorphism, i.e. one compatible with the linear structure of the theory which underpinned the uniqueness result for the symmetric monoidal structure in the first place. To show that cups and caps are essentially unique up to one such $S$-linear natural isomorphism, we consider an equivalent, extended version of $\RMatCategory{S}$, in which objects are no longer self-dual. Objects are given by pairs $n:=(n^+,n^-)$ of natural numbers (where $n^+$ and $n^-$ are either both zero or both non-zero), tensor product is given by $n \otimes m := (n^+\times m^+,n^- \times m^-)$, tensor unit is $(1,1)$, and duals are given by $n^\ast:= (n^-,n^+)$. The object $n$ labels the $S$-module $S^{n^+} \otimes (S^{n^-})^\ast$. In the light of our previous result on the skeletal, self-dual version,  we are also free to set $n \odot m := n \otimes m$ and $n^\circ := n^\ast$ in the non self-dual version, since we are working up to monoidal equivalence with respect to the full version. 

Consider the same orthonormal bases $(\ket{\underline{j};n})_{\underline{j}}$ and co-bases $(\bra{\underline{j};n})_{\underline{j}}$ the we used in the self-dual case. Let $(\ket{\underline{j};n;\ast})_{\underline{j}}$ and $(\ket{\underline{j};n;\circ})_{\underline{j}}$ be the corresponding dual bases, given by $(\otimes,^\ast)$ and $(\odot,^\circ)$ respectively. Similarly, let $(\bra{\underline{j};n;\ast})_{\underline{j}}$ and $(\bra{\underline{j};n;\circ})_{\underline{j}}$ be the corresponding dual co-bases, again given by $(\otimes,^\ast)$ and $(\odot,^\circ)$ respectively. To conclude our proof, we lift the monoidal isomorphism $F$ defined above on the skeletal self-dual version of $\RMatCategory{S}$ to a monoidal isomorphism $\bar{F}: (\RMatCategory{S},\otimes,S,^\ast) \rightarrow (\RMatCategory{S},\odot,J,^\circ)$ in the extended, non self-dual version of $\RMatCategory{S}$. On objects, $\bar{F}$ is defined to be the identity, just as in the self-dual case. By using the dual bases and co-bases, morphisms $f:(n^+,n^-) \rightarrow (m^+,m^-)$, seen as $S$-linear maps $f: S^{n^+} \otimes (S^{n^-})^\ast \rightarrow S^{m^+} \otimes (S^{m^-})^\ast$, can be expressed in matrix form as follows:
\begin{equation}
\sum_{\underline{a},\underline{b}}\sum_{\underline{c},\underline{d}} f_{(\underline{c},\underline{d})(\underline{a},\underline{b})} \Big(\ket{\underline{c};m^+} \otimes \ket{\underline{d};m^-;\ast}\Big) \Big( \bra{\underline{a};n^+} \otimes \bra{\underline{b};n^-;\ast} \Big)
\end{equation}
The functor $\bar{F}$ can then be defined on morphisms as follows:
\begin{align}
&\bar{F}\Big(\sum_{\underline{a},\underline{b}}\sum_{\underline{c},\underline{d}} f_{(\underline{c},\underline{d})(\underline{a},\underline{b})} \Big(\ket{\underline{c};m^+} \otimes \ket{\underline{d};m^-;\ast}\Big) \Big( \bra{\underline{a};n^+} \otimes \bra{\underline{b};n^-;\ast} \Big)\Big) \nonumber\\
&\hspace{2cm}:= \sum_{\underline{a},\underline{b}}\sum_{\underline{c},\underline{d}} f_{(\underline{c},\underline{d})(\underline{a},\underline{b})} \Big(\ket{\underline{c};m^+} \odot \ket{\underline{d};m^-;\circ}\Big) \Big( \bra{\underline{a};n^+} \odot \bra{\underline{b};n^-;\circ} \Big)
\end{align}
The functor $\bar{F}$ is evidently an $S$-linear monoidal isomorphism. Furthermore, it is immediate to check that $\bar{F}$ sends the cups and caps of $(\RMatCategory{S},\otimes,S,^\ast)$ to the cups and caps of $(\RMatCategory{S},\odot,J,^\circ)$. As a consequence, we conclude that the entire compact closed symmetric monoidal structure is essentially unique in the full version of $\RMatCategory{S}$, up to the desired $S$-linear monoidal equivalence.
\end{proof}

\section{The DisCoCat point of view}
\label{section:NLP}
Categorical compositional distributional semantics (DisCoCat) provides a framework to talk about the compositional structure of natural language \cite{Coecke2010}, and has been remarkably successful in various fields of Natural Language Processing, often outperforming competing approaches\cite{Grefenstette2011, Kartsaklis2014}. In this Section, we briefly summarise how categorical compositional distributional models of meaning work, and discuss the importance of our uniqueness result in their regard. 

In the DisCoCat formalism, the grammar of natural language is almost always mathematically formalised by using \textit{Lambek's pregroup grammars} \cite{Lambek1997} (although other categorical grammars~\cite{Coecke2013} have also proven to be effective). Pregroup grammars amount to the imposition of a pregroup structure on a set $P$ of grammatical types. 
\begin{definition}\label{pregroup}
A pregroup $(P,~\leq,~\cdot,~1,~(-)^l,~(-)^r)$ is a partially ordered monoid $(P, \leq, \cdot, 1)$, where each element $p\in P$ has a \inlineQuote{left adjoint} $p^l$ and a \inlineQuote{right adjoint} $p^r$ satisfying the following condition:
$p^l\cdot p \leq 1 \leq p\cdot p^l$ and $p\cdot p^r \leq 1 \leq p^r \cdot p$.
\end{definition}
\noindent A pregroup $P$ can be interpreted as monoidal poset category. The left and right adjoints then endow it with the structure of a thin, non-symmetric compact closed category (sometimes called autonomous, or rigid, in the literature), with cups and caps given by the inequalities in Definition \ref{pregroup} (see Figure~\ref{fig:directedcupcaps} below). 

\begin{figure}[h]
\centering
\begin{tikzpicture}[scale=2]
\begin{pgfonlayer}{nodelayer}
\node [style=none] (0) at (-3, 0.5) {$1 \leq p^r \cdot p$};
\node [style=none] (1) at (-1, 0.5) {};
\node [style=none] (2) at (3, 0.5) {$1 \leq p \cdot p^l$};
\node [style=none] (3) at (3, -0.3) {$p \cdot p^r \leq 1$};
\node [style=none] (4) at (-1, -0.5) {};
\node [style=none] (5) at (1, -0.5) {};
\node [style=none] (6) at (2, -0.5) {};
\node [style=none] (7) at (1, 0.5) {};
\node [style=none] (8) at (-2, 0.5) {};
\node [style=none] (9) at (-3, -0.3) {$p^l \cdot p \leq 1$};
\node [style=none] (10) at (2, 0.5) {};
\node [style=none] (11) at (-2, -0.5) {};
\end{pgfonlayer}
\begin{pgfonlayer}{edgelayer}
\draw [thick, >->, bend left=90, looseness=1.25] (1.center) to (8.center);
\draw [thick, >->, bend right=90, looseness=1.25] (4.center) to (11.center);
\draw [thick, >->, bend right=90, looseness=1.25] (7.center) to (10.center);
\draw [thick, >->, bend left=90, looseness=1.25] (5.center) to (6.center);
\end{pgfonlayer}
\end{tikzpicture}
\caption{The directed cups and caps in a pregroup grammar, given by the corresponding inequalities.}\label{fig:directedcupcaps}
\end{figure}
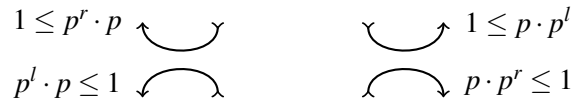

We can construct a grammar from a set of generators~$\{n, s\}$ for the noun and sentence types, with higher order grammatical types (such as those of adjectives or verbs) constructed from $n$, $s$ and their adjoints by tensor product. For example, an adjective has type $n\cdot n^l$, an intransitive verb has type $n^r\cdot  s$ and a transitive verb $n^r\cdot  s\cdot  n^l$. This grammar can be used to verify if a sentence is well-formed: by means of type reductions, we are able to infer the type of the overall sentence, and check if it corresponds to something meaningful in our interpretation. For example, if a string reduces to the type $s$ then the sentence is judged to be grammatical. The sentence \emph{Clowns tell jokes}, depicted in Figure~\ref{fig:clowns} below, is typed $n\cdot (n^r\cdot s\cdot n^l)\cdot n$, and can be reduced to $s$ as~$n\cdot (n^r\cdot s\cdot n^l)\cdot n \leq 1\cdot s\cdot n^l\cdot n \leq 1 \cdot s \cdot 1 \leq s$. Ultimately, the existence of cups/caps in the pregroup is what makes grammatical reductions possible in this framework.
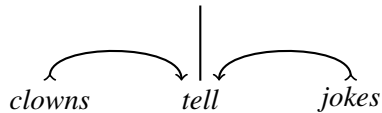
\begin{figure}[h]
\centering
\begin{tikzpicture}[scale=2]
	\begin{pgfonlayer}{nodelayer}
		\node [style=none] (0) at (-2, -0.25) {\emph{clowns}};
		\node [style=none] (1) at (0, -0.25) {\emph{tell}};
		\node [style=none] (2) at (2, -0.25) {\emph{jokes}};
		\node [style=none] (3) at (-2, 0) {};
		\node [style=none] (4) at (-0.25, 0) {};
		\node [style=none] (5) at (0, 0) {};
		\node [style=none] (6) at (0.25, 0) {};
		\node [style=none] (7) at (2, 0) {};
		\node [style=none] (8) at (0, 1) {};
	\end{pgfonlayer}
	\begin{pgfonlayer}{edgelayer}
		\draw [>->, thick, bend left=90, looseness=0.75] (3.center) to (4.center);
		\draw [<-<, thick, bend left=90, looseness=0.75] (6.center) to (7.center);
		\draw [thick] (5.center) to (8.center);
	\end{pgfonlayer}
\end{tikzpicture}
\caption{Graphical reduction of the sentence \emph{Clowns tell jokes}.}\label{fig:clowns}
\end{figure}

Lambek's pregroups provide a categorical model of the \textit{grammar} of natural language, but additional data needs to be specified in order to appropriately capture its \textit{semantics}. Within the DisCoCat framework, this is done by specifying a compact closed symmetric monoidal category $\mathcal{C}$ (the \textit{semantic category}), together with a strong monoidal functor $F$ from the pregroup grammar $P$ into $\mathcal{C}$. Grammatical types $p \in P$ are mapped to objects $F(p)$ of the category (the corresponding \textit{semantic spaces}), and the states of $F(p)$ are taken to provide possible semantics for the fragments of grammatical type $p$ (e.g. words of the different possible types, but also phrases and sentences). Because $F$ is strong monoidal, it sends cups and caps of the pregroup grammar into cups and caps of the semantic category: as a consequence, it transfers compositionality from grammar to semantics, providing a concrete algorithm to compute the \inlineQuote{meaning} of phrases and sentences---understood as a state in the object corresponding to the appropriate grammatical type---from the \inlineQuote{meaning} of their individual constituent words. Borrowing words from General von Clausewitz~\cite{Clausewitz2009}, compact closed symmetric monoidal structure constitutes the real \inlineQuote{\textit{centre of gravity}} of the operational philosophy of categorical compositional distributional semantics.

\paragraph{Significance of our result in linguistics.} Many popular choices of semantic categories take the form $\RMatCategory{S}$ for some commutative semiring $S$, with the involution almost always taken to be trivial: we refer to these as \textit{linear} models of meaning, and we list below the most common examples appearing in the literature. Their presence is so pervasive as to beg the question whether there is some structural reason for their almost complete domination of the DisCoCat scene.

\begin{itemize}
  \item Finite-dimensional real vector spaces are the linear model of meaning par excellence used in natural language processing applications, amounting to the consideration of $\RMatCategory{\reals}$ as semantic category. More generally, one can consider finite-dimensional vector spaces over other fields, such as $\RMatCategory{\complexs}$ (with complex conjugation as involution) or $\RMatCategory{\rationals}$ (with the identity as involution).

  \item The category $\fRelCategory$ of finite sets and relations provides another archetypal example of semantic category for natural language, close in spirit to the formal semantics in the style of Montague. In this context, a semantic space corresponds to some finite set $X$, and the possible semantics (i.e. the states in the space) are given by subsets $U \subseteq X$. The category $\fRelCategory$ is isomorphic to the category $\RMatCategory{\mathbb{B}}$, where $\mathbb{B} := (\{\bot,\top\},\vee,\bot,\wedge,\top)$ is the boolean semiring.

  \item One can also consider more general categories of finite sets and relations, where truth values are taken in a quantale (see~\cite{Marsden2017} for a general treatment of these models of semantics). Every quantale $Q$ is a commutative semiring $(Q,\vee,\bot,\times,1)$, and using finite sets and relations over generalised truth values amounts to the consideration of $\RMatCategory{Q}$ as semantic category.

  \item The category of sets and histograms, corresponding to $\RMatCategory{\naturals}$, has been recently proposed as a counting-based model of semantics~\cite{Gogioso2016}. In this model, a semantic space is given by a finite set $X$, and meaning of words/phrases is given by histograms over $X$. 

\end{itemize}
Keeping these examples in mind, the uniqueness result of Theorem \ref{thm_freeFdSemimodulesRlinearUniqueness} equates to a pretty strong statement about the relationship between grammar and semantics in distributional models of natural language: {\it given a linear model of meaning, the only real freedom in specifying a grammatical structure is the choice of semantic spaces for the generators of the grammar}. This confirms what has been folklore in the DisCoCat community for some time now, and an unstated assumption of many works in its ecosystem: categories in the form $\RMatCategory{S}$ are so common because there is essentially no other way of categorically combining compositional grammar with the underlying linear structure of semantics.

\vspace{-2mm}

\section{Conclusions and Future Work}
\label{section_conclusions}

As part of this work, we have proven that compact closed symmetric monoidal structure is a property---rather than merely a structure---when it comes to the process theories commonly used in the categorical study of quantum theory and natural language processing. From the point of view of quantum theory, this means that there is an essentially unique notion of composite quantum systems and maximal entanglement. From the point of view of natural language processing, this means that linear models of semantics come equipped with a natural choice of compositional grammatical structure on them. In the future, we will be interested in obtaining similar essential uniqueness results for larger classes of process theories, such as those involved in the treatment of infinite-dimensional quantum systems, and some infinite-dimensional models of semantics for natural language processing. 

\vspace{-3.5mm}

\subparagraph*{Acknowledgements.}
SG gratefully acknowledges funding from EPSRC and Trinity College. DM is funded by the AFSOR grant ``Algorithmic and Logical Aspects when Composing Meanings'' and FQXi grant ``Categorical Compositional Physics''. FG and RP are funded by the AFSOR grant ``Algorithmic and Logical Aspects when Composing Meanings''. This project/publication was made possible through the support of a grant from the John Templeton Foundation. The opinions expressed in this publication are those of the authors and do not necessarily reflect the views of the John Templeton Foundation.

\vspace{-3.5mm}

%\bibliographystyle{eptcs}
%\bibliography{biblio}

\newpage
\appendix

\end{document}